\newtheorem{theorem}{Theorem}[section]
\newtheorem{lemma}[theorem]{Lemma}
\newtheorem{proposition}[theorem]{Proposition}
\theoremstyle{definition}
\newtheorem{definition}[theorem]{Definition}
\theoremstyle{remark}
\newtheorem{remark}[theorem]{Remark}
\newtheorem{example}[theorem]{Example}
\numberwithin{equation}{section}
\newcommand{\Real}{\mathbb R}
\newcommand{\Circle}{\mathbb{S}^{1}}
\newcommand{\KK}{\mathcal{K}}
\newcommand{\g}{\mathfrak{g}}
\newcommand{\gstar}{\mathfrak{g}^{*}}
\newcommand{\VectM}{\mathrm{Vect}(M)}
\newcommand{\CM}{\mathrm{C}^{\infty}(M)}
\newcommand{\DiffS}{\mathrm{Diff}(\mathbb{S}^{1})}
\newcommand{\VectS}{\mathrm{Vect}(\mathbb{S}^{1})}
\newcommand{\VectSstar}{\mathrm{Vect}^{*}(\mathbb{S}^{1})}
\newcommand{\CS}{\mathrm{C}^{\infty}(\mathbb{S}^{1})}
\newcommand{\poisson}[2]{\left\{\,#1,#2\right\}}
\newcommand{\brackets}[2]{\left[#1,#2\right]}
\begin{document}

\title[Integrability of invariant metrics]{Integrability of invariant metrics on the diffeomorphism group of the circle} %

\author[A. Constantin]{Adrian Constantin}%
\address{Trinity College, Department of Mathematics, Dublin 2, Ireland}%
\email{adrian@maths.tcd.ie}%

\author[B. Kolev]{Boris Kolev}%
\address{CMI, 39 rue F. Joliot-Curie, 13453 Marseille cedex 13, France }%
\email{boris.kolev@up.univ-mrs.fr}%

\thanks{This research was supported by the SFI-grant 04/BR/M0042}%
\subjclass[2000]{35Q35, 35Q53, 37K10, 37K65}%
\keywords{Bi-Hamiltonian formalism, Diffeomorphisms group of the circle, Camassa-Holm equation}%

\date{July 2005}%


\begin{abstract}
Each $H^k$ Sobolev inner product ($k \ge 0$) defines a Hamiltonian vector
field $X_k$ on the regular dual of the Lie algebra of the diffeomorphism
group of the circle.
We show that only $X_0$ and $X_1$ are bi-Hamiltonian
relatively to a modified Lie-Poisson structure.
\end{abstract}

\maketitle


\section{Introduction}

Often motions of inertial mechanical systems are described in Lagrangian variables by paths on a configuration space $G$ that is a Lie group. The velocity phase space is the tangent bundle $TG$ and the kinetic energy
\begin{equation*}
    \KK = \frac{1}{2} <v,v>
\end{equation*}
for $v \in TG$. For example, in continuum mechanics the state of a system at time $t \ge 0$ can be specified by a diffeomorphism $x \mapsto \varphi(t,x)$ of the ambient space, giving the configuration of
the particles with respect to their initial positions at time
$t=0$. Here $x$ is a label identifying a particle, taken to be the
position of the particle at time $t=0$ so that $\varphi(0,x)=x$. In
this setting $G$ would be the group of diffeomorphisms. The
material (Lagrangian) velocity field is given by $(t,x) \mapsto \varphi_t(t,x)$
while the spatial (Eulerian) velocity field is
$u(t,y)=\varphi_t(t,x)$, where $y=\varphi(t,x)$, i.e. $u=\varphi_t \circ
\varphi^{-1}$. Observe that for any fixed time-independent
diffeomorphism $\eta$, the spatial velocity field $u=\varphi_t \circ
\varphi^{-1}$ along the path $t \mapsto \varphi(t)$ remains unchanged
if we replace this path by $t \mapsto \varphi(t) \circ \eta$. This
right-invariance property suggests to extend the kinetic energy
$\KK$ by right translation to a right-invariant Lagrangian
$\KK: TG \to \Real$, obtaining a Lagrangian system on
$G$. The length of a path $\{\varphi(t)\}_{t \in [a,b]}$ in $G$
is defined as
\begin{equation*}
    \mathfrak{l}(\varphi) = \int_{a}^{b} < \varphi_t,\varphi_t >^{1/2}\,dt .
\end{equation*}
The Least Action Principle
holds if the equation of motion is the geodesic equation.
The set $\DiffS$ of all smooth orientation-preserving diffeomorphisms of the circle represents
the configuration space for the spatially periodic motion of inertial
one-dimensional mechanical systems. $\DiffS$
is an infinite dimensional Lie group, the group operation being composition
\cite{Ham82} and its Lie algebra $\VectS$ being the space of all smooth vector fields on $\Circle$ cf. \cite{Sch87}. On the
regular (or $L^{2}$) dual $\VectSstar$ of the Lie algebra $\VectS$ there are some affine
canonical Lie-Poisson structures, called
\emph{modified Lie-Poisson structures}, which are all compatible. On the other hand,
we can consider on the regular dual $\VectSstar$ a
countable family $\{X_k\}_{k \ge 0}$ of Hamiltonian vector fields defined by Sobolev inner products. The importance of these
inner products lies in that each gives rise via right translation to a
geodesic flow on $\DiffS$, the Riemannian exponential
map of which defines a local chart
for every $k \ge 1$ cf. \cite{CK03} - a property which
fails for the Lie group
exponential map \cite{Ham82,Mil84} as well as for the Riemannian exponential map
if $k=0$ \cite{CK02}. In this paper we show that the Hamiltonian vector field
$X_k$ is bi-Hamiltonian relatively to a modified Lie-Poisson
structure if and only if $k \in \{0,1\}$.


\section{Preliminaries}

In this section, we review some fundamental aspects of finite
dimensional smooth Poisson manifolds.

\begin{definition}
A \emph{symplectic manifold} is a pair $(M,\omega)$, where $M$ is a manifold
and $\omega$ is a closed nondegenerate $2$-form on $M$, that is
$d\omega=0$ and for each $m \in M$, $\omega_m:T_mM \times T_mM \to
\Real$ is a continuous bilinear skew-symmetric map such that the induced
linear map $\tilde{\omega}_v: T_mM \to T_m^\ast M$ defined by
$\tilde{\omega}_v(w)=\omega(v,w)$ is an isomorphism for all $v \in T_mM$.
\end{definition}

\begin{example}\label{exa:SymplecticCase}
In the general study of variational problems, extensive use is made
of the canonical symplectic structure on the cotangent bundle
$T^{\ast} M$ (representing the phase space) of the manifold $M$
(representing the configuration space). This symplectic form is
given in any local trivialization
$(q,p) \in U \times \Real^{n} \subset \Real^{n}\times
\Real^{n}$ of $T^{\ast} M$ by
\begin{equation*}
    \omega_{(q,p)}\Big( (Q,P), \, (\tilde{Q},
\tilde{P}) \Big) =
    \tilde{P}\cdot Q - P\cdot \tilde{Q},\qquad (Q,P),\,(\tilde{Q},\tilde{P})
\in \Real^n \times \Real^n.
\end{equation*}
\end{example}

Since a symplectic form $\omega$ is nondegenerate, it induces an
isomorphism
\begin{equation}\label{equ:flatisomorphism}
    \flat :TM  \to T^{\ast}M,\quad X  \mapsto  X^{\flat} ,
\end{equation}
defined via $X^{\flat}(Y) = \omega(X,Y)$. The \emph{symplectic
gradient} $X_{f}$ of a function $f$ is defined by the relation
$X^{\flat}_{f} = -df$. The inverse of the isomorphism $\flat$
defines a skew-symmetric bilinear form $W$ on the cotangent space of
$M$. This bilinear form $W$ induces itself a bilinear mapping on
$\CM$, the space of smooth functions $f: M \to \Real$, given by
\begin{equation}\label{equ:PoissonBivector}
    \poisson{f}{g} = W(df,dg) = \omega(X_{f},X_{g}),
    \quad f,g \in \CM ,
\end{equation}
and called the \emph{Poisson bracket} of the functions $f$ and $g$.

\begin{example}
In Example~\ref{exa:SymplecticCase}, the Poisson bracket is given by
\begin{equation}\label{equ:even}
    \poisson{f}{g} = \sum_{i=1}^{n} \Big( \frac{\partial f}{\partial
q_{i}} \,
    \frac{\partial g}{\partial p_{i}} - \frac{\partial f}{\partial
p_{i}} \,
    \frac{\partial g}{\partial q_{i}}\Big).
\end{equation}
\end{example}

The observation that a bracket like \eqref{equ:even} could be introduced on
$\CM$ for a smooth manifold $M$, without the use of a symplectic form,
leads to the general notion of a \emph{Poisson structure} \cite{Lic77}.

\begin{definition}
A \emph{Poisson structure} on a $C^{\infty}$ manifold $M$ is a
skew-symmetric bilinear mapping $(f,g)\mapsto \poisson{f}{g}$ on the
space $\CM$, which
satisfies the \emph{Jacobi identity}
\begin{equation}\label{equ:JacobiIdentity}
    \poisson{\poisson{f}{g}}{h} + \poisson{\poisson{g}{h}}{f} +
    \poisson{\poisson{h}{f}}{g} = 0,
\end{equation}
as well as the \emph{Leibnitz identity}
\begin{equation}\label{equ:LeibnitzIdentity}
    \poisson{f}{gh} = \poisson{f}{g}h + g\poisson{f}{h}.
\end{equation}
\end{definition}

When the Poisson structure is induced by a symplectic structure
$\omega$, the \emph{Leibnitz identity} is a direct consequence of
\eqref{equ:PoissonBivector}, whereas the \emph{Jacobi
identity}~\eqref{equ:JacobiIdentity} corresponds to the condition
$d\omega = 0$ satisfied by the symplectic form $\omega$. In the
general case, the fact that the mapping $g\mapsto\poisson{f}{g}$
satisfies \eqref{equ:LeibnitzIdentity} means that it is a derivation
of $\CM$. Each derivation on $\CM$ for a
$C^{\infty}$ manifold (even in the infinite dimensional case cf. \cite{AMR88})
corresponds to a smooth vector field, that is, to each $f \in \CM$
is associated a vector field $X_{f}:M \to TM$, called the
\emph{Hamiltonian vector field} of $f$, such that
\begin{equation}\label{equ:DefHamiltonian}
    \poisson{f}{g} = X_{f} \cdot g = dg . X_{f},
\end{equation}
where $dg.X_f=L_{X_f}g$ is the \emph{Lie derivative} of $g$
along $X_f$. Conversely, a vector field $X: M \to TM$ on a
Poisson manifold $M$ is said to be
\emph{Hamiltonian} if there exists a function $f$ such that
$X=X_{f}$.

Recall \cite{Sch87} that for a smooth vector field $X:M \to TM$,
the Lie derivative operator $L_X:\CM \to \CM$ acts on
smooth functions $g:M \to \Real$ with
differentials $dg:M \to T^\ast M$ by
$(L_Xg)(m)=dg(m)\cdot X(m)$ for $m \in M$. The space $\VectM$
of smooth vector fields on $M$ and the space of operators
$\{L_X:\ X \in \VectM\}$ are isomorphic as real vector spaces,
the linear isomorphism between them being $X \mapsto L_X$ \cite{AMR88}. Therefore
the elements of $\VectM$ can be regarded as operators on
$\CM$ via $X\cdot f=L_Xf$, forming a Lie
algebra if endowed with the bracket $[X,Y]=L_X \circ L_Y-L_Y \circ L_X$.
Notice that \eqref{equ:JacobiIdentity} yields
\begin{equation}\label{equ:bracket}
    \brackets{X_{f}}{X_{g}}=X_{\poisson{f}{g}}.
\end{equation}
From \eqref{equ:bracket} it follows (see \cite{Sch87}) that $g \in
\CM$ is a constant of motion for $X_f$ if and only if
$\poisson{f}{g}=0$.

Jost \cite{Jos64} pointed out that, just like a derivation on $\CM$
corresponds to a vector field, a bilinear bracket
$\poisson{f}{g}$ satisfying the Leibnitz
rule~\eqref{equ:LeibnitzIdentity} corresponds to a skew-symmetric
bilinear form on $TM$. That is, there exists a $C^{\infty}$
tensor field $W\in\Gamma(\bigwedge^{2}TM)$, called the
\emph{Poisson bivector} of $(M,\poisson{\cdot}{\cdot})$, such that
\begin{equation*}
    \poisson{f}{g}=W(df,dg).
\end{equation*}
Using the unique local extension of the
Lie bracket of vector fields to skew-symmetric multivector fields,
called the \emph{Schouten-Nijenhuis bracket} \cite{Vai94}, the condition
\eqref{equ:JacobiIdentity} becomes
\begin{equation}\label{equ:SchoutenNijenhuis}
    \brackets{W}{W}=0.
\end{equation}
Conversely, any $W\in\Gamma(\bigwedge^{2}TM)$ that satisfies
\eqref{equ:SchoutenNijenhuis} induces a Poisson structure on $M$
via \eqref{equ:PoissonBivector}. The only condition that must be satisfied
by $W$ is \eqref{equ:SchoutenNijenhuis} since \eqref{equ:LeibnitzIdentity}
holds automatically. A Poisson structure on $M$ is therefore
equivalent to a bivector $W$
that satisfies \eqref{equ:SchoutenNijenhuis}. This induces a homomorphism
\begin{equation}\label{equ:SharpHomomorphism}
    \#:T^{\ast}M  \to TM,\quad \alpha  \mapsto  \alpha^{\#},
\end{equation}
such that $\beta(\alpha^{\#})=W(\beta ,\alpha)$ for every $\beta\in
T^{\ast}M$. Notice that for $f\in \CM$ we have
$(df)^{\#}=X_{f}$. If the homomorphism \eqref{equ:SharpHomomorphism}
is an isomorphism we call the Poisson structure
\emph{nondegenerate}. A nondegenerate Poisson structure on $M$ is
equivalent to a symplectic structure where the symplectic form
$\omega$ is just $\# W$, the closedness condition corresponding to
the Jacobi identity \cite{Vai94}.

\begin{remark}
The notion of a Poisson manifold is more general than that of a symplectic
manifold. For example, in the symplectic
case the Poisson bracket satisfies
the additional property that $\poisson{f}{g}=0$ for all $g \in \CM$
only if $f \in \CM$ is constant, whereas for Poisson manifolds
such non-constant functions $f$ might exist, in which case they are called
\emph{Casimir functions}. To highlight this, notice that by
Darboux' theorem \cite{Sch87} a finite dimensional symplectic manifold $M$
has to be even dimensional and locally there are coordinates
$\{q_1,...,q_n, p_1,...,p_n\}$ such that $\poisson{f}{g}$ is given
by \eqref{equ:even}. On the other hand, on $M=\Real^{2n+1}$ with coordinates
$\{q_1,...,q_n, p_1,...,p_n,\zeta\}$ we determine a Poisson structure
defining the Poisson bracket of $f,\,g \in C^\infty(\Real^{2n+1})$ by
the same formula \eqref{equ:even}. Notice that any
$f \in C^\infty(\Real^{2n+1})$ which depends only on $\zeta$ is a
Casimir function.
\end{remark}

Two Poisson bivectors $W_{1}$ and $W_{2}$ define \emph{compatible}
Poisson structures if
\begin{equation}\label{equ:CompatibilityCondition}
    \brackets{W_{1}}{W_{2}}=0.
\end{equation}
This is equivalent to say that for any $\lambda,\mu\in\Real$,
\begin{equation*}
    \poisson{f}{g}_{\lambda,\,\mu} = \lambda\poisson{f}{g}_{1}+\mu\poisson{f}{g}_{2}
\end{equation*}
is also a Poisson bracket. On a manifold $M$ equipped with two
compatible Poisson structures, a vector field $X$ is said to be
(formally) \emph{integrable} or
\emph{bi-Hamiltonian} if it is Hamiltonian for both structures.

On a symplectic manifold $(M,\omega)$, a necessary condition for a
vector field $X$ to be Hamiltonian is that $L_{X}\omega=0$ \cite{Sch87}.
A similar criterion exists for a Poisson manifold $(M,W)$. It is
instructive for later considerations to present a short proof of
this known result.

\begin{proposition}
On a Poisson manifold $(M,W)$ a necessary condition for a vector
field $X$ to be \emph{Hamiltonian} is
\begin{equation}\label{equ:HamiltonianCriterium}
    L_{X}W = 0.
\end{equation}
\end{proposition}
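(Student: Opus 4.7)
The plan is to show that if $X = X_f$ for some $f \in \CM$, then the bivector $L_X W$ vanishes when evaluated on any pair of 1-forms, and the Jacobi identity will be what makes this happen.

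First I would reduce the problem to testing $L_X W$ against exact 1-forms. Since $L_X W$ is a $C^\infty$-linear skew-symmetric tensor on $T^* M$ and every 1-form is locally a $\CM$-linear combination of differentials, it suffices to show that $(L_X W)(dg, dh) = 0$ for all $g, h \in \CM$.

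Next I would apply the standard Leibniz rule for the Lie derivative of a tensor contraction:
\begin{equation*}
    L_X \bigl( W(dg, dh) \bigr) = (L_X W)(dg, dh) + W(L_X dg, dh) + W(dg, L_X dh).
\end{equation*}
Using $W(dg, dh) = \poisson{g}{h}$ together with the fact that $L_X$ commutes with $d$, so that $L_X dg = d(L_X g) = d\poisson{f}{g}$, this rearranges as
\begin{equation*}
    (L_X W)(dg, dh) = \poisson{f}{\poisson{g}{h}} - \poisson{\poisson{f}{g}}{h} - \poisson{g}{\poisson{f}{h}}.
\end{equation*}

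Finally I would invoke the Jacobi identity \eqref{equ:JacobiIdentity}, which after reordering terms via skew-symmetry reads precisely
\begin{equation*}
    \poisson{f}{\poisson{g}{h}} = \poisson{\poisson{f}{g}}{h} + \poisson{g}{\poisson{f}{h}}.
\end{equation*}
Therefore $(L_X W)(dg, dh) = 0$ for all $g, h$, which gives $L_X W = 0$.

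There is no real obstacle: the argument is a one-line computation once one recognises that Jacobi is exactly the derivation property of $\poisson{f}{\cdot}$ on the Poisson bracket itself. The only minor point worth stating carefully is the reduction to exact 1-forms, which is legitimate because tensor fields are determined by their evaluation on local frames and $\{df : f \in \CM\}$ spans $T^*M$ over $\CM$ locally.
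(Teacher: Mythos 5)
Your proof is correct and follows essentially the same route as the paper: both apply the Leibniz rule for the Lie derivative of the contraction $W(dg,dh)$, use $L_{X_f}dg = d\poisson{f}{g}$, and recognise that the resulting expression vanishes by the Jacobi identity. Your explicit remark that it suffices to test $L_XW$ on exact 1-forms is a point the paper leaves implicit, but it does not change the argument.
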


\begin{proof}
If $X$ is Hamiltonian, there is a function $h\in \CM$ such
that $X = X_{h}$. Let $f$ and $g$ be arbitrary smooth functions on
$M$. We have
\begin{equation*}
    L_{X}W \, (df,dg) = L_{X} \left( W(df,dg) \right) - W \left( L_{X}
    df,dg \right) - W \left( df,L_{X}dg \right) .
\end{equation*}
But $L_{X_{h}}f = \poisson{h}{f}$ and  $L_{X_{h}}df = dL_{X_{h}}f =
d\poisson{h}{f}$. Therefore
\begin{align*}
    L_{X}W(df,dg) & = L_{X}\poisson{f}{g} - W\left( d\poisson{h}{f},dg
    \right) - W\left( df,d\poisson{h}{g} \right) \\
              & = \poisson{h}{\poisson{f}{g}} -
    \poisson{\poisson{h}{f}}{g} - \poisson{f}{\poisson{h}{g}} .
\end{align*}
This last expression equals zero because of the Jacobi identity.
\end{proof}

The fundamental example of a non-symplectic Poisson structure is the
\emph{Lie-Poisson structure} on the dual
$\gstar$ of a Lie algebra $\mathfrak{g}$.

\begin{definition}
On the dual space $\gstar$ of a Lie algebra $\g$
of a Lie group $G$, there is a Poisson structure defined by
\begin{equation}\label{equ:LiePoisson}
    \poisson{f}{g}(m) = m(\brackets{d_{m}f}{d_{m}g})
\end{equation}
for $m\in\gstar$ and $f,g \in
C^{\infty}(\gstar)$, called the \emph{canonical
Lie-Poisson structure}~\footnote{Here, $d_{m}f$, the differential of a function $f \in C^\infty(\gstar)$ at $m \in
\gstar$ is to be understood as an element of the Lie algebra $\g$}.
\end{definition}

\begin{remark}
The canonical Lie-Poisson structure has the remarkable property to be
\emph{linear}. A Poisson bracket on a vector space is said to be \emph{linear}
if the bracket of two linear functionals is itself a linear functional.
\end{remark}

Each element $\gamma\in\bigwedge^{2}\gstar$ can be viewed
as a Poisson bivector. Indeed, $\brackets{\gamma}{\gamma}=0$ since
$\gamma$ is a constant tensor field. As such, $\gamma$ defines a
Poisson structure on $\gstar$. The condition of
compatibility with the canonical Lie-Poisson structure,
$\brackets{W_{0}}{\gamma}=0$, can be written as (see \cite{Vai94}, Chapter 3)
\begin{equation}\label{equ:compat}
    \gamma(\brackets{u}{v},w)+\gamma(\brackets{v}{w},u) + \gamma(\brackets{w}{u},v)=0,\qquad u,v,w \in \g.
\end{equation}

On a Lie group $G$, a right-invariant $k$-form $\omega$ is completely
defined by its value at the unit element $e$, and hence by an
element of $\bigwedge^{k}\gstar$. In other words, there is
a natural isomorphism between the space of
right-invariant $k$-forms on $G$ and $\bigwedge^{k}\gstar$.
Moreover, since the exterior differential $d$ commutes with right
translations, it induces a linear operator
$\partial:\bigwedge^{k}\gstar \to \bigwedge^{k+1}\gstar$ that satisfies $\partial\circ\partial=0$ and

\begin{enumerate}
\item $\partial\gamma=0$ for $\gamma\in\bigwedge^{0}\gstar=\Real$;
\item $\partial\gamma\,(u,v)=-\gamma(\brackets{u}{v})$ for $\gamma\in\bigwedge^{1}\gstar=\gstar$;
\item $\partial\gamma\,(u,v,w)=\gamma(\brackets{u}{v},w) + \gamma(\brackets{v}{w},u)+\gamma(\brackets{w}{u},v)$ for $\gamma\in\bigwedge^{2}\gstar$,
\end{enumerate}
where $u,v,w \in \g$, as one can check by direct computation
(see \cite{GS90}, Chapter 24). The kernel $Z^n(\g)$ of
$\partial: \bigwedge^n(\gstar) \to \bigwedge^{n+1}(\gstar)$ is the space of \emph{n-cocycles}
and the range $B^n(\g)$ of
$\partial: \bigwedge^{n-1}(\gstar) \to
\bigwedge^{n}(\gstar)$ is the spaces of
\emph{n-coboundaries}. Notice that $B^n(\g)
\subset Z^n(\g)$. The quotient space $H^n_{CE}(\g)=
Z^n(\g)/B^n(\g)$ is the $n$-th \emph{Lie algebra cohomology} or \emph{Chevaley-Eilenberg cohomology
group} of $\g$. Notice that in general the Lie algebra cohomology is different from the de Rham cohomology $H^n_{DR}$. For example,
$H^{1}_{DR}(\Real) = \Real$ but $H^{1}_{CE}(\Real) =0$.

Each  $2$-cocycle $\gamma$ defines a
Poisson structure on $\gstar$ compatible with the canonical one.
Indeed \eqref{equ:compat} can be recast as $\partial \gamma=0$. Notice that
the Hamiltonian vector field $X_{f}$ of a function $f\in
C^{\infty}(\gstar)$ computed with respect to the
Poisson structure defined by the $2$-cocycle $\gamma$ is
\begin{equation}\label{equ:ConstantPoissonHamiltonian}
    X_{f}(m) = \gamma (d_mf, \cdot).
\end{equation}

\begin{definition}
A \emph{modified Lie-Poisson structure} is a Poisson
structure on $\gstar$ whose Poisson bivector is given by
$W_{\gamma} = W_{0} + \gamma,$
where $W_{0}$ is the canonical Poisson bivector and $\gamma$ is a
$2$-cocycle.
\end{definition}

\begin{example}
A special case of modified Lie-Poisson structure is given by a
$2$-cocycle $\gamma$ which is a coboundary. If
$\gamma = \partial m_{0}$
for some $m_{0}\in \gstar$, the expression
\begin{equation*}
    \poisson{f}{g}_{\gamma} (m) = m_{0}(\brackets{d_{m}f}{d_{m}g})
\end{equation*}
looks like if the Lie-Poisson bracket had been "frozen" at a
point $m_{0} \in \gstar$ and for this reason some
authors call it a \emph{"freezing"} structure.
\end{example}


\section{Modified Lie-Poisson structures on $\VectS$}

The group $\DiffS$ of smooth orientation-preserving
diffeomorphisms of the circle $\Circle$ is endowed with a
smooth manifold structure based on the Fr\'{e}chet space
$\CS$. The composition and the inverse are
both smooth maps $\DiffS\times \DiffS \to \DiffS$,
respectively $\DiffS \to \DiffS$, so that $\DiffS$
is a Lie group \cite{Ham82}. Its Lie algebra $\VectS$ is the space
of smooth vector fields on $\Circle$, which is isomorphic to the space
$\CS$ of periodic functions. The Lie bracket
on $\VectS$ is given by
\begin{equation*}
    \brackets{u}{v} = uv_{x} - u_{x}v .
\end{equation*}
Since the topological dual of the Fr\'{e}chet space
$\VectS$ is too big, being isomorphic to the space of
distributions on the circle, we restrict our attention in the
following to the \emph{regular dual} $\VectSstar$, the
subspace of distributions defined by linear
functionals of the form
\begin{equation*}
    u \mapsto \int_{\Circle}mu\, dx
\end{equation*}
for some function $m \in \CS$. The regular
dual $\VectSstar$ is therefore isomorphic to
$\CS$ by means of the $L^2$ inner
product~\footnote{In the sequel, we use the notation $u,v,\dotsc$
for elements of $\VectS$ and $m,n,\dots$ for elements of
$\VectSstar$ to distinguish them, although they all
belong to $\CS$.}
\begin{equation*}
    <u , v > = \int_{\Circle} uv \, dx .
\end{equation*}
Let $f$ be a smooth real valued function on $\CS$.
Its \emph{Fr\'{e}chet} derivative at $m$, $df(m)$ is a
linear functional on $\CS$. We say that $f$
is a \emph{regular function} if there exists a smooth map
$\delta f : \CS \to \CS$
such that
\begin{equation*}
    df(m)\,M = \int_{\Circle} M\cdot\delta f (m)
 \, dx,\qquad m, M \in \CS.
\end{equation*}
That is, the Fr\'{e}chet derivative $df(m)$ belongs to the regular dual
$\VectSstar$ and the mapping $m \mapsto \delta f(m)$ is
smooth. The map $\delta f$ is a vector field on
$\CS$, called the \emph{gradient} of $f$ for the $L^{2}$-metric. In other words, a
regular function is a smooth function on $\CS$ which has a smooth gradient.

\begin{example}
Typical examples of \emph{regular functions} are nonlinear \emph{functionals}
over the space $\CS$, like
\begin{equation*}
    f(m) =  \int_{\Circle} \left( m^{2} + mm_{x}^2 \right)  \, dx \quad\hbox{with}\quad \delta f(m)=2m-m_x^2-2mm_{xx},
\end{equation*}
as well as linear functionals
\begin{equation*}
    f(m) =  \int_{\Circle} um  \,
dx\quad\hbox{with}\quad \delta f(m)=u \in  \CS.
\end{equation*}
Notice that the smooth function $f_\theta: \CS \to \Real$ defined by $f_\theta(m)=m(\theta)$ for some fixed $\theta \in \Circle$ is not regular as $\delta f_\theta$ is the Dirac measure at $\theta$.
\end{example}

Conversely, a smooth vector field $X$ on $\VectSstar$ is
called a \emph{gradient} if there exists a \emph{regular function}
$f$ on $\VectSstar$ such that $X(m) = \delta f(m)$ for all $m
\in \VectSstar$. Observe that if $f$ is a smooth real valued
function on $\CS$ then its second Fr\'{e}chet
derivative is symmetric \cite{Ham82}, that is,
\begin{equation*}
    d^{2}f(m)(M,N) = d^{2}f(m)(N,M), \qquad m,M,N \in
    \CS .
\end{equation*}
For a regular function, this property can be written as
\begin{equation}\label{equ:SymmetricCondition}
    \int_{\Circle} \Big( d\,\delta f(m)M \Big)N  \, dx = \int_{\Circle} \Big( d\,\delta f(m)N \Big)M  \, dx ,
\end{equation}
for all $m,M,N \in \CS$. Hence the linear
operator $d\,\delta f(m)$ is symmetric for the $L^{2}$-inner product on $\CS$ for each $m \in \CS$. We will resume this fact in the following lemma.

\begin{lemma}
A necessary condition for a vector field $X$ on
$\CS$ to be a \emph{gradient} is that its
Fr\'{e}chet derivative $dX(m)$ is a symmetric linear operator.
\end{lemma}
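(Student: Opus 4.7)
The plan is to unpack the definition of a gradient and invoke the symmetry of the second Fréchet derivative that was already recorded in the paragraph preceding the lemma.

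Suppose that $X$ is a gradient, so there is a regular function $f$ on $\VectSstar$ with $X(m) = \delta f(m)$ for every $m \in \CS$. Differentiating this identity in $m$, I get $dX(m) = d\,\delta f(m)$, viewed as a linear operator on $\CS$. Thus, to show that $dX(m)$ is symmetric with respect to the $L^2$-inner product, it suffices to establish the symmetry of $d\,\delta f(m)$, and this is precisely the content of \eqref{equ:SymmetricCondition}.

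To justify \eqref{equ:SymmetricCondition}, I would start from the defining relation
\begin{equation*}
    df(m)\,M = \int_{\Circle} M \cdot \delta f(m) \, dx, \qquad m, M \in \CS,
\end{equation*}
and differentiate once more in the direction $N$. Since $f$ is smooth on $\CS$, its second Fréchet derivative exists and is symmetric, i.e.\ $d^{2}f(m)(M,N) = d^{2}f(m)(N,M)$ for all $m, M, N \in \CS$. On the other hand, differentiating the displayed formula for $df(m)\,M$ in $m$ along $N$ (here $M$ is a fixed direction, and the integral is a bilinear pairing that commutes with the Fréchet derivative) yields
\begin{equation*}
    d^{2}f(m)(M,N) = \int_{\Circle} M \cdot \bigl(d\,\delta f(m)\,N\bigr) \, dx.
\end{equation*}
Equating the two sides of the symmetry of $d^{2}f(m)$ then gives exactly \eqref{equ:SymmetricCondition}, which is the symmetry of $d\,\delta f(m)$ for the $L^{2}$-pairing, and hence of $dX(m)$.

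The only mild subtlety is justifying that the $m$-derivative may be passed under the $L^{2}$ integral; in the Fréchet-manifold setting this is routine because $\delta f$ is assumed smooth as a map $\CS \to \CS$ and the integral is a continuous bilinear pairing, so no nontrivial obstacle arises. The lemma is therefore an immediate consequence of Schwarz's theorem for Fréchet derivatives together with the definition of $\delta f$.
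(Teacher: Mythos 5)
Your proof is correct and follows essentially the same route as the paper, which derives the lemma from the symmetry of the second Fr\'{e}chet derivative $d^{2}f(m)(M,N)=d^{2}f(m)(N,M)$ (citing \cite{Ham82}) rewritten via the defining relation for $\delta f$ as the $L^{2}$-symmetry condition \eqref{equ:SymmetricCondition}. Your additional remark justifying differentiation under the integral is a harmless elaboration of a step the paper takes for granted.
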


To define a \emph{Poisson bracket} on the space of \emph{regular
functions} on $\VectSstar$, we consider a one-parameter family
of linear operators $J(m)$ and set
\begin{equation}\label{equ:PoissonBracket}
    \poisson{f}{g}(m) = \int_{\Circle} \delta f(m) \, J(m) \, \delta g(m)  \, dx .
\end{equation}
The operators $J(m)$ must satisfy certain conditions in
order for \eqref{equ:PoissonBracket} to be a valid Poisson structure
on $\VectSstar$.

\begin{definition}
A family of linear operators $J(m)$ on $\VectSstar$ defines a Poisson structure on $\VectSstar$ if
~\eqref{equ:PoissonBracket} satisfies
\begin{enumerate}
    \item $\poisson{f}{g}$ is regular if $f$ and $g$ are regular,
    \item $\poisson{g}{f} = - \poisson{f}{g}$,
    \item $\poisson{\poisson{f}{g}}{h} + \poisson{\poisson{g}{h}}{f}
    + \poisson{\poisson{h}{f}}{g} = 0$.
\end{enumerate}
\end{definition}

Notice that the second condition above simply
means that $J(m)$ is a skew-symmetric operator for each $m$.

\begin{example}
The canonical Lie-Poisson structure on $\VectSstar$ given by
\begin{equation*}
    \poisson{f}{g}(m) = m \left( \brackets{\delta f}{\delta g} \right)
    = \int_{\Circle} \delta f (m) \left( mD + Dm \right) \delta
    g(m)\,dx
\end{equation*}
is represented by the one-parameter family of
skew-symmetric operators
\begin{equation}\label{equ:LiePoissonOperator}
    J(m)=mD+Dm
\end{equation}
where $D=\partial_x$. It can be checked that all the
three required properties are
satisfied. In particular, we have
\begin{equation*}
    \delta \poisson{f}{g} = d \, \delta f (J \delta g) -
d \, \delta g (J \delta
    f) + \delta f \, \delta g_{x} - \delta g \, \delta f_{x} .
\end{equation*}
\end{example}

\begin{definition}
The \emph{Hamiltonian} of a \emph{regular} function $f$, for a
Poisson structure defined by $J$ is defined as the vector field
\begin{equation*}
    X_{f}(m) = J(m) \, \delta f(m).
\end{equation*}
\end{definition}

\begin{proposition}\label{prop:LinearHamiltonianCriterium}
A necessary condition for a smooth vector field $X$ on
$\VectSstar$ to be \emph{Hamiltonian} with respect to the
Poisson structure defined by a \emph{constant} linear operator $K$
is the symmetry of the operator $dX(m)\circ K$ for each $m\in \VectSstar$.
\end{proposition}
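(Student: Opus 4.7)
The plan is to combine two symmetry facts: the skew-symmetry of $K$ (which is built into the hypothesis that $K$ defines a Poisson structure, via condition (2) of the Poisson-structure definition) and the symmetry of the Fréchet derivative of a gradient, stated in the lemma just before. If I can exhibit $dX(m)\circ K$ as a sandwich of a symmetric operator between $K$ and $K$, skew-symmetry of $K$ will force the full composition to be symmetric.

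First, assume $X$ is Hamiltonian with respect to $K$, so there exists a regular function $f$ on $\VectSstar$ with
\begin{equation*}
X(m) = K\,\delta f(m)\qquad \text{for all } m \in \VectSstar .
\end{equation*}
Because $K$ is \emph{constant} (independent of $m$), Fréchet differentiation with respect to $m$ commutes with $K$, giving
\begin{equation*}
dX(m) = K \circ d\,\delta f(m) ,
\end{equation*}
and therefore
\begin{equation*}
dX(m)\circ K = K \circ d\,\delta f(m) \circ K .
\end{equation*}

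Next, I compute the $L^2$-adjoint of the right-hand side. By the lemma, the operator $d\,\delta f(m)$ is symmetric for the $L^2$ inner product, i.e. $(d\,\delta f(m))^{*} = d\,\delta f(m)$. By condition (2) of the definition of a Poisson structure, $K$ is skew-symmetric, i.e. $K^{*} = -K$. Hence
\begin{equation*}
\bigl(K \circ d\,\delta f(m) \circ K\bigr)^{*} = K^{*}\circ d\,\delta f(m)^{*}\circ K^{*} = (-K)\circ d\,\delta f(m)\circ(-K) = K \circ d\,\delta f(m)\circ K ,
\end{equation*}
so $dX(m)\circ K$ is its own adjoint, as claimed.

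The proof is essentially a one-line calculation once the two ingredients are identified, so there is no real obstacle; the only subtlety is to make sure the hypothesis ``constant operator'' is used correctly, namely that it lets the differentiation pass through $K$ and yields $dX(m)=K\circ d\,\delta f(m)$ rather than a more complicated expression involving $dK(m)$. This mirrors the proof of the analogous criterion \eqref{equ:HamiltonianCriterium} for Poisson manifolds, but the argument here is purely algebraic because $K$ does not depend on the base point.
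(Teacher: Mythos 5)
Your proof is correct and takes essentially the same route as the paper: write $X=K\,\delta f$, use constancy of $K$ to obtain $dX(m)\circ K = K\circ d\,\delta f(m)\circ K$, and conclude symmetry from the skew-symmetry of $K$ and the symmetry of $d\,\delta f(m)$, with the only difference being that the paper carries out the final step as an explicit $L^2$ inner-product computation while you phrase it via adjoints. One cosmetic remark: the adjoint reverses composition order, $(A\circ B\circ C)^{*}=C^{*}\circ B^{*}\circ A^{*}$, so the identity as you wrote it is valid only because the outer factors happen to coincide (both equal $K$); it is worth stating the reversal explicitly.
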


\begin{proof}
If $X$ is Hamiltonian, we can find a regular function $f$ such that
\begin{equation*}
    X(m) = K \delta f(m) .
\end{equation*}
Moreover, since $K$ is a constant linear operator, we have
\begin{equation*}
    d\big( K\, \delta f \big)(m) \, M = K \circ \big( d\delta f(m) \big) \, M .
\end{equation*}
Therefore, we get
\begin{align*}
    < { dX(m)\circ K \, M , \, N }  >
        & = < { K \circ d\delta f(m) \circ K \, M , \, N }  > \\
        & = < { M , \, K \circ d\delta f(m) \circ K \, N }  > \\
        & = < { M , \, dX(m)\circ K \, N }  > ,
\end{align*}
since $K$ is skew-symmetric and $d\delta f(m)$ is symmetric.
\end{proof}

A $2$-cocycle on $\VectS$ is a bilinear functional
$\gamma$ represented by a skew-symmetric operator $K:
\CS \to \CS$ such that
\begin{equation*}
    \gamma(u,v) = < u,Kv  > = \int_{\Circle} u\, K
    \, v \, dx ,
\end{equation*}
and satisfying the Jacobi identity
\begin{equation*}
    < \brackets{u}{v}, Kw  > + < \brackets{v}{w}, Ku
 > +
    < \brackets{w}{u}, Kv  > = 0 .
\end{equation*}
If $K$ is a differential operator we call $\gamma$ a
\emph{differential cocycle}. Gelfand and Fuks \cite{GF68} observed
that all differential $2$-cocycles of $\VectS$ belong
to the one-dimensional cohomology class generated by $[D^{3}]$.
Moreover, each regular $2$-coboundary is represented by the
skew-symmetric operator
\begin{equation*}
    m_{0}D + Dm_{0} ,
\end{equation*}
for some $m_{0} \in \CS$. Therefore, each
differential $2$-cocycle of $\VectS$ is represented by
an operator of the form
\begin{equation}\label{equ:PoissonOperator}
    K = m_{0}D + Dm_{0} + \beta D^{3}
\end{equation}
where $m_{0} \in \CS$ and $\beta \in
\Real$ (see also \cite{GR07}).

For $k\geq0$ and $u,v\in \VectS\equiv
\CS$, let us now define the $H^k$ (Sobolev)
inner product by
\begin{equation*}
    < u,\, v >_{k} = \int_{\Circle} \sum_{i=0}^{k} (\partial_{x}^{i}u)\,(\partial_{x}^{i}v)\, dx = \int_{\Circle}A_{k}(u)\, v\, dx\,,
\end{equation*}
where
\begin{equation}\label{equ:Ak}
    A_{k} = 1-\frac{d^{2}}{dx^{2}}+...+(-1)^{k}\frac{d^{2k}}{dx^{2k}}
\end{equation}
is a continuous linear isomorphism of $\CS$.
Note that $A_{k}$ is a symmetric operator for the $L^{2}$ inner
product since
\begin{equation*}
    \int_{\Circle} A_{k}(u) \, v\, dx = \int_{\Circle}u\, A_{k}(v)\, dx.
\end{equation*}
The operator $A_{k}$ gives rise to a Hamiltonian function on
$\VectSstar$ given by
\begin{equation*}
    h_{k}(m) = \int_{\Circle} \tfrac{1}{2} \, m(A_{k}^{-1}m) \, dx .
\end{equation*}
The corresponding Hamiltonian vector field $X_{k}$ is given by
\begin{equation*}
    X_{k}(m) = (mD + Dm)(A_{k}^{-1}m) = 2mu_{x} + um_{x} ,
\end{equation*}
if we let $m = A_{k}u$.

\begin{theorem}
The Hamiltonian vector field $X_{k}$ is bi-Hamiltonian relatively to
a modified Lie-Poisson structure if and only if $k \in \{0,1\}$.
\end{theorem}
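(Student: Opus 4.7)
My plan is to reformulate the problem algebraically and reduce the search for a second Hamiltonian structure to a symbolic identity in $D$ that can be solved exactly when $k \in \{0,1\}$. By the Gelfand--Fuks result recalled just before the theorem, every differential $2$-cocycle on $\VectS$ with regular representative has the form $K = m_0 D + D m_0 + \beta D^3$ with $m_0 \in \CS$ and $\beta \in \Real$, so every modified Lie-Poisson structure on $\VectSstar$ is given by the Poisson operator
\begin{equation*}
J_1 = J_0 + K = (m + m_0) D + D(m + m_0) + \beta D^3 .
\end{equation*}
Because $X_k = J_0\,A_k^{-1} m$ is tautologically Hamiltonian for $J_0$, the bi-Hamiltonianity claim is equivalent to producing a regular $f$ satisfying $J_1\,\delta f = X_k$, and by the Lemma on gradients the field $v := \delta f$ must furthermore have Fréchet derivative $dv(m)$ symmetric in $L^2$ at every point $m$.

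The sufficiency direction ($k = 0, 1$) I would handle via the linear ansatz $v(m) = A_k^{-1} m + \mu$ with $\mu \in \Real$ and $m_0 \in \Real$ constants. Using $m_x = A_k (A_k^{-1} m)_x$, the condition $J_1 v = X_k$ simplifies after cancellations to the requirement that the operator $2 m_0 + \mu A_k + \beta D^2$ annihilate every mean-zero function, i.e.\ that its Fourier symbol
\begin{equation*}
2 m_0 + \mu\,\alpha_k(n) - \beta n^2, \qquad \alpha_k(n) = 1 + n^2 + \dotsb + n^{2k} ,
\end{equation*}
vanish for all $n \in \mathbb{Z}\setminus\{0\}$. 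For $k=0$ this forces $\mu = -2m_0$ with $\beta = 0$; for $k=1$ it forces $\mu = -2m_0$ and $\beta = \mu$, each admitting nontrivial solutions (the $k=1$ case recovers the classical second structure $K = D - D^3$ of the Camassa--Holm equation with $m_0 = \tfrac12$, $\beta = -1$, $\mu = -1$). The corresponding $f$ is a linear combination of $h_k$ and $\int_{\Circle} m\,dx$, which is manifestly regular.

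The same symbolic identity immediately rules out the linear ansatz for $k \ge 2$: the coefficient of $n^{2k}$ forces $\mu = 0$, whence $m_0 = \beta = 0$ and $J_1 = J_0$. The main obstacle, and the technical heart of the theorem, is to extend this into a \emph{nonexistence} statement for every regular $f$. My plan is to expand $v$ according to its polynomial degree in $m$, $v = v^{(0)} + v^{(1)} + v^{(2)} + \dotsb$, observe that $X_k$ is purely of degree two while $K$ preserves and $J_0$ raises the degree by one, and solve the resulting graded system
\begin{equation*}
K v^{(n)} + J_0 v^{(n-1)} = \delta_{n,2}\,X_k, \qquad n \ge 0 ,
\end{equation*}
recursively by inverting the (generically invertible) Fourier symbol of $K$ on each nonzero mode. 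The crux is then to impose the symmetry constraint $\int_{\Circle} M\,dv(m)\,N\,dx = \int_{\Circle} N\,dv(m)\,M\,dx$, which amounts to total symmetry of the trilinear form $T(m;M,N) := \int_{\Circle} M\,dv^{(2)}(m)\,N\,dx$ in its three arguments. Propagating the symbol $\alpha_k(n)^{-1}$ of $A_k^{-1}$ through the reconstruction of $v^{(2)}$ yields algebraic identities among $(m_0,\beta)$, and the delicate Fourier bookkeeping shows that these identities admit no solution as soon as the degree $2k$ of $A_k$ exceeds two; isolating a specific three-mode configuration whose triple-symmetrization cannot be cancelled by any $(m_0,\beta)$ is the hardest step of the argument.
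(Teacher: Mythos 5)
Your sufficiency half is correct, and it takes a genuinely different route from the paper: you realize $X_k$ ($k=0,1$) as Hamiltonian for the full modified operator $J_1=(m+m_0)D+D(m+m_0)+\beta D^3$ with the \emph{quadratic} function $f=h_k+\mu\int_{\Circle} m\,dx$, whereas the paper exhibits $X_0$ and $X_1$ as Hamiltonian for the \emph{constant} cocycle operators $K=D$ and $K=D-D^3$ with the cubic Hamiltonians $\tilde h_0$ and $\tilde h_1$. Be aware that these are different conditions: $X=(J_0+K)\,\delta f$ and $X=K\,\delta g$ are not equivalent, and the paper's necessity argument is pitched at the latter (it shows no differential cocycle $K$ admits $X_k=K\,\delta g$ for $k\ge 2$). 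So even a completed version of your program would prove a parallel statement rather than literally the paper's one; you should state explicitly which reading of ``bi-Hamiltonian relatively to a modified Lie-Poisson structure'' you adopt. Your linear-ansatz computation itself (the symbol condition $2m_0+\mu\,\alpha_k(n)-\beta n^2=0$, forcing $\beta=0$, $\mu=-2m_0$ for $k=0$ and $\beta=\mu=-2m_0$ for $k=1$, recovering $K=D-D^3$) checks out.

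The genuine gap is the necessity direction, which you only outline. Two concrete problems. First, your recursion ``invert the Fourier symbol of $K$ on each nonzero mode'' presupposes $m_0$ constant; for general $m_0\in\CS$ the operator $K=m_0D+Dm_0+\beta D^3$ is not Fourier-diagonal, and the reduction to constant $m_0$ is itself a nontrivial step that must be proved. The paper does it by forming $P(m)=dX_k(m)\circ K$, computing $P(1)-P(1)^{\ast}$, and extracting, after conjugation by $A_k$, the top-order coefficient $2i\,m_0'(x)\,r^{4k+1}$ on the mode $e^{irx}$, which forces $m_0'=0$; nothing in your sketch plays this role. Moreover, even with constant coefficients the symbol $i(\alpha n-\beta n^3)$ vanishes at $n=0$ and at resonant modes with $\alpha=\beta n^2$, so each stage of your graded system $Kv^{(n)}+J_0v^{(n-1)}=\delta_{n,2}X_k$ carries kernel freedom and solvability obstructions; a \emph{nonexistence} proof must quantify over all these choices, and ``generically invertible'' does not suffice. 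Second, and decisively, the step you yourself call the crux --- isolating a mode configuration whose symmetrization cannot be cancelled by any $(m_0,\beta)$ when $k\ge 2$ --- is asserted, not performed, and it is exactly where the content of the theorem lies. The paper carries it out explicitly: it tests the symmetry of $P(m)$ on $u=e^{iax}$, $M=e^{ibx}$, $N=e^{icx}$ with $a=n$, $b=-2n$, $c=n$, and compares large-$n$ asymptotics through $f_k(n)/f_k(2n)\to 2^{-2k}$, which forces $k=1$ when $\beta\neq 0$, while $\beta=0$ gives $\alpha f_k(n)=\alpha f_k(2n)$ and hence $\alpha=0$ unless $k=0$. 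Until you produce the analogous explicit computation inside your graded scheme --- where $v^{(1)}$ and $v^{(2)}$ additionally depend on the unresolved kernel choices --- the proposal establishes only the ``if'' direction.
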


\begin{proof}
It is well known (see \cite{Mor98}) that $X_0$ is bi-Hamiltonian
with respect to the operator $D$ which represents a coboundary.
It is also known that $X_{1}$ is a bi-Hamiltonian vector field with
respect to the cocycle represented by the operator
$D(1-D^{2})$ cf. \cite{CH93,CM99,FF81}. Notice that this cocycle is not a
coboundary.

We will now show that there is no differential cocycle
\begin{equation*}
    K = m_{0}D + Dm_{0} + \beta D^{3}
\end{equation*}
for which $X_{k}$ could be Hamiltonian unless $k \in \{0,1\}$. We have
\begin{equation*}
    dX_{k} (m) = 2u_{x}I + uD + 2m DA_{k}^{-1} + m_{x}A_{k}^{-1} ,
\end{equation*}
and in particular, for $m=1$,
\begin{equation*}
    dX_{k} (1) = D + 2DA_{k}^{-1} .
\end{equation*}
Letting
\begin{equation*}
    P(m) = dX_{k}(m) \circ K ,
\end{equation*}
we obtain that
\begin{equation*}
    P(1) = \big( D + 2DA_{k}^{-1} \big) \circ \big( m_{0}D + Dm_{0} \big) + \beta D^{4} ( 1 + 2A_{k}^{-1}) ,
\end{equation*}
whereas
\begin{equation*}
    P(1)^{\ast} = \big( m_{0}D + Dm_{0} \big) \circ \big( D + 2DA_{k}^{-1} \big) + \beta D^{4} ( 1 + 2A_{k}^{-1}) .
\end{equation*}
Therefore, denoting $m_0'=\partial_x m_0$, we have
\begin{multline*}
    P(1) - P(1)^{\ast} = \big( m_0' D + Dm_0' \big)
     + 2 \big( A_{k}^{-1}Dm_{0}D - Dm_{0}DA_{k}^{-1} \big) + \\
     + 2 \big( A_{k}^{-1}D^{2}m_{0} - m_{0}D^{2}A_{k}^{-1} \big) .
\end{multline*}
If this operator is zero, we must have in particular the relation
\begin{equation*}
    A_{k} \big( P(1) - P(1)^{\ast} \big) A_{k}(e^{irx}) = 0 ,
\end{equation*}
for all $r\in \mathbb{Z}$. But, for $r \neq \pm 1$,
\begin{equation*}
    A_{k}(e^{irx}) = f_{k}(r)\,e^{irx} \quad \text{with} \quad f_{k}(r) = \frac{r^{2k+2}-1}{r^{2}-1} ,
\end{equation*}
and
\begin{equation*}
    A_{k} \big( P(1) - P(1)^{\ast} \big) A_{k}(e^{irx})
\end{equation*}
is of the form $e^{irx}$ times
a polynomial expression in $r$ with highest order term
$2i\,m_0'(x) \, r^{4k+1}$. Therefore, a necessary condition for $X_{k}$
to be Hamiltonian relatively to the Poisson operator $K$ defined
by~\eqref{equ:PoissonOperator} is that $m_{0}$ is a constant.

Let $\alpha = 2m_{0}\in \Real$. Then
\begin{multline*}
    P(m) = dX_{k}(m) \circ K = \alpha \left\{ 2u_{x}D + uD^{2} + 2m D^{2}A_{k}^{-1} + m_{x}DA_{k}^{-1}
    \right\} + \\
    + \beta \left\{ 2u_{x}D^{3} + uD^{4} + 2m D^{4}A_{k}^{-1} + m_{x}D^{3}A_{k}^{-1} \right\}
\end{multline*}
because $D$ and $A_{k}$ commute. By virtue of
Proposition~\ref{prop:LinearHamiltonianCriterium}, a necessary
condition for $X_{k}$ to be Hamiltonian with respect to the cocycle
represented by $K$ is that $P(m)$ is symmetric. That is
\begin{equation}\label{equ:SymmetricOperator}
    < P(m)M,N  > = < M, P(m)N  > ,
\end{equation}
for all $m,M,N \in \CS$. Since this last
expression is tri-linear in the variables $m,M,N$, the equality can
be checked for complex periodic functions $m,M,N$ where the $L^{2}$
inner product is extended naturally into a complex bilinear
functional. That is, the extension is not a hermitian product, we
just allow homogeneity with respect to the complex scalar field in
both components. Let $m = A_{k}u$, $u=\exp(iax)$, $M = \exp(ibx)$
and $N = \exp(icx)$ with $a,b,c \in \mathbb{Z}$. We have
\begin{multline*}
    < P(m)M, N  > = \Big[ (2ab^{3} + b^{4})\beta - (2ab + b^{2})\alpha + \\
    + \Big( (ab^{3} + 2b^{4})\beta - (ab + 2b^{2})\alpha
    \Big)\frac{f_{k}(a)}{f_{k}(b)} \Big] \int_{\Circle} e^{i(a+b+c)x}dx \, ,
\end{multline*}
whereas
\begin{multline*}
    < M, P(m)N  > = \Big[ (2ac^{3} + c^{4})\beta - (2ac +
c^{2})\alpha + \\
    + \Big( (ac^{3} + 2c^{4})\beta - (ac + 2c^{2})\alpha
    \Big)\frac{f_{k}(a)}{f_{k}(c)} \Big] \int_{\Circle} e^{i(a+b+c)x}dx \, .
\end{multline*}
For $a=n$, $b=-2n$ and $c=n$, we obtain
\begin{equation}\label{equ:explicit_expression}
    < P(m)M, N  > = (24n^{4}\beta -6n^{2}\alpha)\frac{f_{k}(n)}{f_{k}(2n)},\
    < M, P(m)N  > = 6n^{4}\beta - 6n^{2}\alpha.
\end{equation}
The equality of the two expressions in \eqref{equ:explicit_expression} for all $n \in \mathbb{N}$ is ensured
by means of \eqref{equ:SymmetricOperator}. For $k=1$ this leads to the
condition $\alpha+\beta=0$ and we recover the second Poisson structure
given by $K=D-D^3$ for which $X_1$ is known to be Hamiltonian with
Hamiltonian function
\begin{equation*}
    \tilde{h}_1(m) = \frac{1}{2}\int_{\Circle}\Big( (A_1^{-1}m)^3+(A_1^{-1}m)\,[(A_1^{-1}m)_x]^2 \Big)\,dx.
\end{equation*}

In the general case, if $\beta \neq 0$, the leading term with respect
to $n$ in the first expression in \eqref{equ:explicit_expression} is $(-48\,\beta\,2^{-2k})$,
whereas in the second it is $(-12\,\beta)$. Thus unless $\beta=0$ we
must have $k=1$. On the other hand, if $\beta=0$, from \eqref{equ:SymmetricOperator}-\eqref{equ:explicit_expression}
we infer that $\alpha f_k(n)=\alpha f_k(2n)$ for all $n \in \mathbb{N}$. Thus $\alpha=0$ unless $k=0$. For $k=0$ we recover the Poisson structure given by $K=D$ for which $X_0$ is Hamiltonian with
Hamiltonian function
\begin{equation*}
    \tilde{h}_0(m)={1 \over 2}\int_{\Circle} m^3\,dx.
\end{equation*}
This completes the proof.
\end{proof}


\section{Conclusion}

We showed that among all $H^k$ Sobolev inner products on
$\CS$, only for $k \in \{0,1\}$ is the associated vector field bi-Hamiltonian relatively to a modified
Lie-Poisson structure. Endowing $\DiffS$ with the $H^1$ right-invariant metric, the associated geodesic equation turns out to be the Camassa-Holm equation \cite{Kou99} (see also \cite{Kol04})
\begin{equation*}
    u_t+uu_x+\partial_x(1-\partial_x^2)^{-1}(u^2+{1 \over
  2}\,u_x^2)=0,
\end{equation*}
a model for shallow water waves (see \cite{CH93} and the alternative derivations in \cite{Con01,Fok95,FL96,Joh02}) that
accommodates waves that exist indefinitely in time \cite{Con97,CE98a} as
well as breaking waves \cite{CE98,CE00}. The bi-Hamiltonian structure
is reflected in the existence of infinitely many conserved integrals
for the equation \cite{CH93,CM99,FF81,Len04a} which are very useful in
the qualitative analysis of its solutions. Both global existence
results and blow-up results can be obtained using certain conservation
laws \cite{Con97,CE98a,Wah05}, while the proof of stability of traveling wave solutions
relies on the specific form of some conserved quantities \cite{Con98,CM99,CS00,Len04}. On the other hand, the geodesic equation on $\DiffS$ for the $L^2$ right-invariant metric is the inviscid Burgers equation
\begin{equation*}
    u_t+3uu_x=0.
\end{equation*}
This model of gas dynamics has been thoroughly studied
(see \cite{CK02} and references therein). In contrast to the case of
the $H^1$ right-invariant metric \cite{CK03}, the Riemannian
exponential map is not a $C^1$ local diffeomorphism in the case of the
$L^2$ right-invariant metric \cite{CK02}. This means that of the two
bi-Hamiltonian vector fields $X_0$ and $X_1$, the second generates a
flow on $\DiffS$ with properties that parallel those of geodesic
flows on finite-dimensional Lie groups.


\bibliographystyle{plain}
\bibliography{CK2006}

\end{document}